\numberwithin{equation}{section}
\numberwithin{figure}{section}
\newcommand{\wb}[1]{\overline{#1}}
\newcommand{\ub}[1]{\underline{#1}}
\def\m@th{\mathsurround=0pt}
\mathchardef\bracell="0365
\def\upbrall{$\m@th\bracell$}
\def\undertilde#1{\mathop{\vtop{\ialign{##\crcr
				$\hfil\displaystyle{#1}\hfil$\crcr
				\noalign
				{\kern1.5pt\nointerlineskip}
				\upbrall\crcr\noalign{\kern-5pt
	}}}}\limits}
\newcommand{\al}{\alpha}
\newcommand{\bt}{\beta}
\newcommand{\gm}{\gamma}
\newcommand{\dl}{\delta}
\newcommand{\lm}{\lambda}
\newcommand{\bblu}{\begin{color}{blue}}
	\newcommand{\bred}{\begin{color}{red}}{\tiny }
		\newcommand{\ecl}{\end{color}}
	\newtheorem{thm}{Theorem}
	\newtheorem{prp}{Proposition}
	\numberwithin{thm}{section}
	\numberwithin{prp}{section}
\begin{document}
		
		\title{Initial Value Spaces of Integrable Lattice Equations}
		
		\author{Nalini Joshi}\thanks{This research was supported by an Australian Laureate Fellowship \# FL 120100094 from the Australian Research Council. }
		\address{School of Mathematics and Statistics F07, The University of Sydney, NSW 2006, Australia}
		\email{nalini.joshi@sydney.edu.au}
		\author{Sarah B. Lobb}
		\address{School of Mathematics and Statistics F07, The University of Sydney, NSW 2006, Australia}
		\email{sarah.lobb@gmail.com}
		\author{Matthew Nolan}\thanks{Supported by an Australian Postgraduate Award.}
		\address{School of Mathematics and Statistics F07, The University of Sydney, NSW 2006, Australia}
		\email{m.nolan@maths.usyd.edu.au}
		\subjclass[2010]{37K10}
		\keywords{ABS equations, lattice equations, consistency around the cube, discrete Painlev\'e equations, initial value spaces}
		
		\begin{abstract}
			In this paper, we examine the space of initial values for integrable lattice equations, which are lattice equations classified by Adler {\em et al} (2003), known as ABS equations. By considering the map which iterates the solution along particular directions on the lattice, we perform resolutions of singularities for several examples of ABS equations for the first time. Our geometric observations lead to new Miura transformations and reductions to ordinary difference equations.
			\begin{center}
				{\tiny{\today}}
			\end{center}
		\end{abstract}
		\maketitle
		\section{Introduction}\label{sect_intro}
		The primary goal of this paper is to construct an analogue of initial value spaces for integrable partial difference equations, naturally following the constructions of such spaces by Sakai\cite{Sak2001,KNY2016} for ordinary difference equations. To be specific, we focus on four examples of ABS equations \cite{AdlBobSur2003}, embedding them in projective space $\mathbb P^3$ and regularising the space by resolving singularities. Although the sequence of resolutions is guaranteed to terminate by standard theorems in algebraic geometry, the sequence is no longer algorithmic due to the difficulties and richness of higher dimensions. Nevertheless, we are able to carry out the resolutions. As byproducts, we derive new transformations and reductions of these equations.
		
		Our starting point is a quadrilateral as shown in Figure \ref{fig_square}.
		\begin{figure}[H]
			\begin{center}
				\begin{tikzpicture}[scale = 2]	
				\draw[-] (1.1,1) -- (-0.1,1) node[above] {$v$};
				\draw[-] (1,-0.1) -- (1,1.1) node[right] {$y$};
				\draw[-] (0,1.1) -- (0,-0.1) node[left] {$x$};
				\draw[-] (-0.1,0) -- (1.1,0) node[below] {$u$};
				\draw[black] (0.5,-0.1) node {$\al$};
				\draw[black] (-0.1,0.5) node {$\bt$};
				\draw[fill=white] (0,0) circle (1pt);
				\draw[fill=white] (1,0) circle (1pt);
				\draw[fill=white] (0,1) circle (1pt);
				\draw[fill=white] (1,1) circle (1pt);
				\draw (0.5,0.5) node {$\mathcal F$};
				\end{tikzpicture}
			\end{center}
			\caption{Elementary quadrilateral, with vertices $x,u,v,y$ and face $\mathcal F$.}
			\label{fig_square}
		\end{figure}
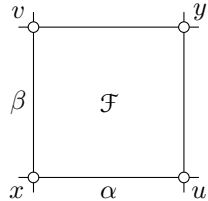
		Each ABS equation is written as the zero set of a polynomial of the variables at the vertices $x$, $u$, $v$, $y$:
		\begin{equation*}
			Q(x,u,v,y) = 0,
		\end{equation*}
		where $Q$ is affine linear in each of $x, u, v, y$. We can interpret this equation as a partial difference equation by identifying the vertices with a function $x_{l,m}$ in the following way:
		\begin{equation}\label{eq:PdiffE}
		(x,u, v,y)=(x_{l,m},x_{l+1,m},x_{l,m+1},x_{l+1,m+1}).
		\end{equation}
		We call such equations \textit{quad-equations}.
		
		The parameters $\al$ and $\bt$ (which may vary with the location of each vertex in the lattice) are associated with directions in the lattice, so that for example $\al$ is associated with the edge connecting $x,u$ as well as with the edge connecting $v,y$, and similarly $\bt$ is associated with both the edges $x,v$ and $u,y$.
		
		Such lattice equations are partial difference equations posed on quadrilaterals (see Figure \ref{fig_square}), where the solution on one vertex is given in terms of the values on the remaining three vertices. Three (or more) dimensional initial-value spaces therefore arise and we construct them as compactifications and regularisations of $\mathbb C^3$. To provide explicit constructions and results, we focus on specific examples of lattice equations from the ABS list. Our geometric construction leads us to new information about solutions, by providing new connections between distinct members of the ABS list via Miura transformation to a lattice equation first found in \cite{nijhoff2009soliton}. Our main results are summarized in Section \ref{sect_main}.
		\subsection{Main Results}\label{sect_main}
		Proposition \ref{prp:H3} provides a resolution of singularities for the ABS equation H3$_{\dl=0}$. Given $(l, m)$, The resolution is carried out on a quadrilateral, one vertex of which corresponds to this value on the lattice. In the proof of this result, we show that the map is completely regularised after blowing up four lines and four points.  Each exceptional variety that arises in a resolution is also considered under iteration of $l$ and $m$.
		
		Theorem \ref{thm:coalescence} uses Proposition \ref{prp:H3} to show that several ABS equations are mapped to one and the same equation \eqref{eq:mystery} under a previously unknown Miura transformation. This result provides a Miura transformation of H3, Q1, Q3, and A1 to a lattice equation first found in \cite{nijhoff2009soliton}, namely Equation \eqref{eq:mystery}.
		
		Proposition \ref{prp:reduction} focuses on Equation \eqref{eq:mystery} and provides a reduction to a QRT map.
		
		\subsection{Background}
		The study of initial-value spaces began with Okamoto's study of the classical Painlev\'e equations \cite{Oka1979}. Since the solutions become unbounded in domains containing poles, Okamoto extended the space of initial values to include such cases by compactifying it and regularising each resulting two-dimensional surface. Sakai extended this framework to discrete Painlev\'e equations. For discrete equations, it is convenient to take the compactification to be $\mathbb{P}^1\times\mathbb{P}^1$ (shorthand for $\mathbb C\mathbb{P}^1\times \mathbb C\mathbb{P}^1$).
		
		Discrete Painlev\'e equations are second-order equations and correspondingly, require a two dimensional space of initial values. It is known that it is necessary to blow-up $\mathbb{P}^1\times\mathbb{P}^1$ at 8 points to resolve the space of initial values, and the resulting surfaces were described by Sakai.
		
		Following this construction of two-dimensional initial value spaces of Painlev\'e equations, in this paper we construct an initial value space for several examples of higher dimensional discrete systems. In this higher dimensional setting, there exist algebraic sets of initial values where the equation becomes ill-defined. In the case of $d$-dimensional spaces of initial values where $d>2$, we refer to  the singular sets as {\em base varieties} in keeping with the terminology \cite{Dui}. Our focus lies more specifically on integrable partial difference equations which are self-consistent on $N$-dimensional cubic lattices \cite{nijhoff2001discrete}. Adler {\it et al} \cite{AdlBobSur2003,AdlBobSur2009a} showed how to classify such partial difference equations under certain conditions, and the list of equations they classified is known as the ABS list. We will be concentrating on concrete examples from the ABS list known as A1, H3, Q1, and Q3, which are given by
		\begin{subequations}
			\begin{align}
				\al\,(x+v)(u+y)-\bt\,(x+u)(v+y)-\dl^2\,\al\,\bt&=0\label{eq:A1},\\
				\al\,(x\,u+v\,y)-\bt(x\,v+u\,y)+\dl\,(\al^2-\bt^2)&=0,\\
				\al\,(x-v)(u-y)-\bt\,(x-u)(v-y)+\dl^2\,\al\,\bt(\al-\bt)&=0\label{eq:Q1},\\
				(\bt^2-\al^2)(x\,y+u\,v)+\bt\,(\al^2-1)(x\,u+v\,y)-\al\,(\bt^2-1)(x\,v+u\,y)&\,\nonumber\\
				-\dl^2(\al^2-\bt^2)(\al^2-1)(\bt^2-1)/(4\,\al\,\bt)&=0\label{eq:Q3},
			\end{align}
		\end{subequations}
		respectively.
		
		To solve a partial difference equation everywhere on a lattice we require an infinite number of initial values, and hence its initial-value space must be infinite-dimensional.  However, if we start in a generic sub-domain of the lattice containing a finite set of vertices connected by edges and consider resolution of singularities for this sub-domain (in particular, a single quadrilateral) we only require a finite number of initial values. This is explained in further detail in Section \ref{sect_inits}.
		
		The deep connection between integrable partial difference equations and discrete Painlev\'e equations was discovered by Nijhoff {\it et al} \cite{NP1991}. Many examples of reductions from ABS equations to discrete Painlev\'e equations have now been found \cite{GRSWC2005,OVQ2013}. A {\em geometric reduction} method was provided in \cite{JNS2014,JNS2015, JNS2016}, to find systematic reductions from ABS equations on $n$-dimensional cubes to discrete Painlev\'e equations with specific symmetry groups; these include $A_2^{(1)}+A_1^{(1)}$, $A_1^{(1)}+A_1^{(1)}$, and $A_4^{(1)}$. To illustrate the relationship between ABS equations and discrete Painlev\'e equations, we provide an example of a reduction here. This type of reduction is commonly called a \textit{staircase} or \textit{periodic} reduction, and is the type we will be using in this paper.
		
		The equation H3$_{\dl=0}$ (also known as the lattice modified Korteweg-de Vries equation) has a staircase reduction to the equation
		\begin{equation}\label{eq:qp2}
		\wb w\,w\,\ub w = \frac{z\,w-1}{w-z},
		\end{equation}
		where $w=w(z)$, $\wb w=w(q\,z)$ and $\ub w=w(z/q)$. A non-autonomous version of equation \eqref{eq:qp2} is referred to as the second $q$-discrete Painlev\'e equation, or $q\mathrm{P}_{\mathrm{II}}$ in the literature \cite{GRSWC2005}. 
		\begin{figure}[H]
			\begin{center}
				\begin{tikzpicture}[scale=2]
				\draw[-] (0,0.5) -- (0,2.5) node[right] {$ $};
				\draw[-] (0.5,0.5) -- (0.5,2.5) node[right] {$ $};
				\draw[-] (-0.5,2) -- (3.5,2) node[right] {$ $};
				\draw[-] (-0.5,1.5) -- (3.5,1.5) node[right] {$ $};
				\draw[-] (-0.5,1) -- (3.5,1) node[right] {$ $};
				\draw[-] (2,0.5) -- (2,2.5) node[right] {$ $};
				\draw[-] (2.5,0.5) -- (2.5,2.5) node[right] {$ $};
				\draw[-] (3,0.5) -- (3,2.5) node[right] {$ $};
				\draw[-] (1.5,0.5) -- (1.5,2.5) node[right] {$ $};
				\draw[-] (1,0.5) -- (1,2.5) node[right] {$ $};
				
				\draw[ultra thick] (0,2) -- (1,2) ;
				\draw[ultra thick] (1,1.5) -- (2,1.5) ;
				\draw[ultra thick] (2,1) -- (3,1) ;
				\draw[ultra thick] (1,2) -- (1,1.5) ;
				\draw[ultra thick] (2,1.5) -- (2,1) ;
				
				\draw[ultra thick] (0,2) -- (0,2.5) ;
				\draw[ultra thick] (3,0.5) -- (3,1) ;
				
				\draw[dashed] (0,2.5) -- (3.5,0.75) ;
				\draw[dashed] (-0.5,2.25) -- (3,0.5) ;
				
				\draw[->] (-0.5,0.2) -- (0,0.2) node[right] {$l$};
				\draw[->] (-0.7,0.5) -- (-0.7,1) node[above] {$m$};
				\end{tikzpicture}
			\end{center}
			\caption{Vertices of the staircase (those along each dotted line) are identified with each other.}
			\label{staircase0}
		\end{figure}
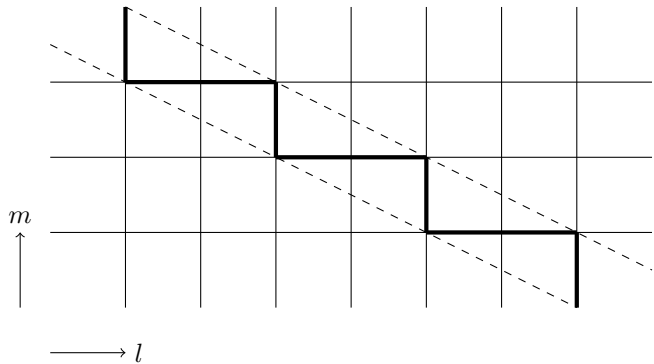
		The reduction is obtained by assuming that the solution $x_{l,m}$ of H3$_{\dl=0}$ stays constant along a line given by $l+2\,m=constant$ on the lattice, denoted by dotted lines in Figure \ref{staircase0}. The intersections of each dotted line with the lattice in Figure \ref{staircase0} coincide with points describing treads of a staircase, and so this type of reduction is commonly referred to as a \textit{staircase} reduction. Modifications of reductions of this type are also possible. For example, a periodic reduction given by $w_{l,m+1}=k/w_{l+1,m}$, for nonzero $k$ with $w_{l,m}$ is studied in Section \ref{sect:Q3}.
		\subsection{Plan of the Paper}\label{s:plan}
		The plan of the paper is as follows. In Section \ref{sect_inits}, we describe the initial conditions of lattice equations on quadrilaterals in a lattice and demonstrate the resolution of higher dimensional base varieties that occur in $\mathbb{P}^3$. We then show how to construct an appropriate regularised space of initial conditions analogous to the corresponding spaces known for the discrete Painlev\'e equations. In Section \ref{sect:Q3} we show how to use this resolution of singularities to find transformations of lattice equations and hence find new reductions of several examples of ABS equations.
		
		\section{Initial Values of Lattice Equations on Quad Graphs}\label{sect_inits}
		In this section, we start by giving notation for initial values of ABS equations on a lattice. We then explain how to carry out the resolution of singularities in spaces of dimension greater than two, with our main example being H3 from the ABS list. The resolution is carried out for initial values given on a generic quadrilateral of the lattice of independent variables described by a fixed vertex $(l, m)$.
		
		In general, lattice equations have an infinite-dimensional space of initial conditions. However, when we restrict our attention to a finite number of iterations on the lattice, it is sufficient to consider certain finite sets of initial data given on a bounded sub-domain of the lattice. In particular, when we iterate only once, initial conditions are only needed on a single quadrilateral: see Figure \ref{fig_square}. A quad-equation provides the value of the variable $y$ on a vertex of the quadrilateral when three initial conditions $x$, $u$, $v$ are given on the remaining 3 vertices. We denote the corresponding map on projective space as
		$$\phi_3:\mathbb{P}^3\longrightarrow\mathbb{P}^1, [x:u:v:1]\longmapsto [y:1].$$
		
		The lattice equation defining $y$ in terms of $x, u, v$ possesses indeterminancies where the image of $\phi_3$ becomes $[0:0]$, or equivalently, where $y$ becomes $\frac{0}{0}$. We refer to such subvarieties of $\mathbb{P}^3$ as \textit{base varieties} or \textit{base curves}. For the map $\phi_3$, we initially find 1-dimensional (codimension-2) base varieties in $\mathbb{P}^3$. In order to resolve these base varieties we perform a blow-up, replacing each base curve with a surface isomorphic to $\mathbb{P}^1\times\mathbb{P}^1$ -- a so-called \textit{exceptional plane}.
		
		For some base varieties, it may not be possible to view the entire curve in a single affine chart. For this reason we consider the map in homogeneous coordinates. Changing to homogeneous coordinates in $\mathbb{P}^3$ by taking
		$$[x:u:v:1]=[X/W:U/W:V/W:1]=[X:U:V:W],$$
		we take a base variety $B$ defined by the simultaneous vanishing of the functions $\lambda([X:U:V:W])$ and $\mu([X:U:V:W])$. The blow-up of $\mathbb{P}^3$ centred at $B$ is then the surface defined as
		\[\tilde{X}=
		\left\{([X:U:V:W],[\xi:\eta])\ \vline\ \lambda\,\xi-\mu\,\eta=0\right\}\subset\mathbb{P}^3\times\mathbb{P}^1.
		\]
		
		As in the 2-dimensional case, we have a new projective coordinate given by \linebreak$[\xi:\eta]=[\mu:\lm]$. We then cover the exceptional plane in affine charts, one using the coordinate $\xi/\eta$, and another $\eta/\xi$. In general, in order to resolve any codimension-$k>1$ subvariety which is the locus of equations $x_1=\dots=x_k=0$ in a space $X$, take $y_1,\dots,y_k$ as homogeneous coordinates of $\mathbb{P}^{k-1}$. Then the blow-up $\tilde{X}$ is the locus of the equations $x_i\,y_j=x_j\,y_i\ \forall\ i,j$, in the space $X\times\mathbb{P}^{k-1}$ \cite{Harr}.
		
		After blowing up these base varieties to exceptional planes (each isomorphic to $\mathbb{P}^1\times\mathbb{P}^1$), there may be singular varieties remaining on each plane. Upon resolving all singular varieties which may appear after all necessary blow-ups, we say the system is resolved.
		\subsection{Resolution of H3}
		We now focus on the explicit resolution of base varieties for the ABS equation H3$_{\dl=0}$. Setting $\dl=0$ and solving for a vertex $y$ on a quadrilateral as in Figure \ref{fig_square}, we can express H3 as
		\begin{equation}\label{eq_H3y}
		y=x\frac{\al\,u-\bt\,v}{\bt\,u-\al\,v}.
		\end{equation}
		To consider this in homogeneous coordinates we define $$[x:u:v:1]=:[X/W:U/W:V/W:1]=[X:U:V:W].$$
		In these coordinates, $y$ is given by
		$$y=\frac{X(\al\,U-\bt\,V)}{W(\bt\,U-\al\,V)}.$$
		Recall that $[0:0:0:0]$ is not defined in $\mathbb{P}^3$. We find that the map 
		\begin{equation}\label{labelphi3}
		\phi_3:\mathbb{P}^3\longrightarrow\mathbb{P}^1, [X:U:V:W]\longmapsto [y:1],
		\end{equation}
		has 4 base lines, given by
		\begin{align}
			U=V=& 0,\tag{B$_1$}\\
			X=\al\,V-\bt\,U=& 0, \tag{B$_2$}\\
			W=\bt\,V-\al\,U=& 0, \tag{B$_3$}\\
			W=X=& 0. \tag{B$_4$}
		\end{align}
		The relative position and intersections of these lines in $\mathbb{P}^3$ is shown in Figure \ref{img:H3P3}. The corners of the tetrahedron are labelled with the their corresponding coordinates in $\mathbb{P}^3$.
		
		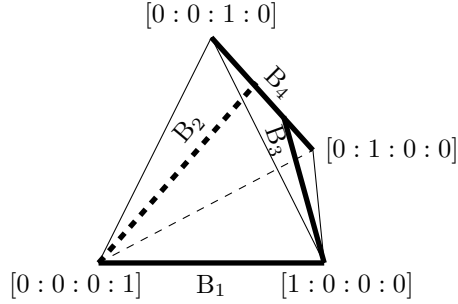
\begin{figure}[H]
			\begin{center}
				\begin{tikzpicture}[scale=3]
				%front faces
				\draw[-] (-0.5,0) -- (0,1);
				\draw[-] (0.5,0) -- (0,1);
				\draw[-] (0,1) -- (0.45,0.5);
				\draw[-] (0.5,0) -- (0.45,0.5);
				%backface
				\draw[dashed] (-0.5,0) -- (0.45,0.5);
				%vars
				\draw[line width=2pt] (-0.5,0) -- (0.5,0);
				\draw[line width=2pt] (0,1) -- (0.45,0.5);
				\draw[line width=2pt] (0.5,0) -- (0.32,0.65);
				\draw[line width=2pt,dashed] (-0.5,0) -- (0.2,0.8);
				%labels
				\node[] at (-0.6,-0.1) {$[0:0:0:1]$};
				\node[] at (0.6,-0.1) {$[1:0:0:0]$};
				\node[] at (0,1.1) {$[0:0:1:0]$};
				\node[] at (0.8,0.5) {$[0:1:0:0]$};
				\node[rotate=0] at (0,-0.1) {B$_1$};
				\node[rotate=50] at (-0.1,0.6) {B$_2$};
				\node[rotate=-70] at (0.29,0.54) {B$_3$};
				\node[rotate=-45] at (0.3,0.8) {B$_4$};
				\end{tikzpicture}
				\caption{Base varieties of H3$_{\dl=0}$ in $\mathbb{P}^3$, base varieties drawn in bold, lines on rear faces are represented with dots.}
			\end{center}
			\label{img:H3P3}
		\end{figure}
		\begin{prp}\label{prp:H3}
			The map $\phi_3$ \eqref{labelphi3} is resolved after blowing up along 4 lines and 4 points.
		\end{prp}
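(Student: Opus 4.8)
The strategy is the standard one for resolving the indeterminacy of a rational map: locate the base locus, blow it up, and iterate until the lift of $\phi_3$ is a morphism, doing everything in affine charts. First I would record the base locus exactly. Since the numerator $X(\al\,U-\bt\,V)$ and denominator $W(\bt\,U-\al\,V)$ of $\phi_3$ each factor into two linear forms, the common zero set in $\mathbb{P}^3$ is the union of the four lines obtained by pairing one factor of the numerator with one factor of the denominator; these are exactly B$_1$, B$_2$, B$_3$, B$_4$ as already displayed. For generic parameters ($\al^2\neq\bt^2$) the four lines form a quadrilateral: B$_1\cap$B$_2=[0:0:0:1]$, B$_2\cap$B$_4=[0:\al:\bt:0]$, B$_4\cap$B$_3=[0:\bt:\al:0]$, B$_3\cap$B$_1=[1:0:0:0]$, while B$_1\cap$B$_4$ and B$_2\cap$B$_3$ are empty; so each base line meets exactly two of the others, one at each of these four corners.

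Next I would blow up the four base lines one at a time, taking strict transforms and covering each line and each exceptional plane with affine charts. In every chart the lift of $\phi_3$ is obtained by substituting the blow-up relations into $y$; the local defining equation of the new exceptional divisor appears as a common factor of the numerator and denominator, and the reduced base locus of the lift is read off after this factor is cancelled. Two simplifications keep this tractable. At the corner $[0:0:0:1]$ (chart $W=1$, so $x=X$) the invertible linear change of variables $a=\al\,u-\bt\,v$, $b=\bt\,u-\al\,v$ reduces \eqref{eq_H3y} to the normal form $y=x\,a/b$, whose base locus is $\{a=b=0\}\cup\{x=b=0\}$ --- the local pictures of B$_1$ and B$_2$ --- and an analogous normal form holds at each of the other three corners; moreover the restriction of the lift to the exceptional plane $E_i\cong\mathbb{P}^1\times\mathbb{P}^1$ over B$_i$ is again affine-linear in each $\mathbb{P}^1$-factor, so its indeterminacy points can be listed directly. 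One finds that blowing up B$_i$ regularises $\phi_3$ along the generic point of B$_i$ but leaves in the base locus the strict transforms of the two adjacent base lines together with finitely many points of $E_i$ lying over the corners; after all four line blow-ups, what survives of the base locus is a set of four points, one attached to each corner.

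Finally I would blow up these four points and repeat the chart computation once more: with the same cancellation of the exceptional factor, one checks that in every chart the numerator and denominator of the lifted map now have no common zero, so the lift of $\phi_3$ is defined everywhere, which is the assertion of the proposition. The hard part is the middle step. Because the four base lines intersect, the sequence of blow-ups is no longer algorithmic and an order for the lines must be fixed; and the genuinely lengthy work is the bookkeeping of how each base curve and base point transforms under each blow-up, across the many affine charts of $\mathbb{P}^3\times\mathbb{P}^1$ and of its successive blow-ups. The delicate point is to verify that no new indeterminacy is created along the intersections of the exceptional planes, so that precisely four point blow-ups --- and no more --- are needed to finish.
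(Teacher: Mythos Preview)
Your proposal is correct and follows essentially the same approach as the paper: identify the four base lines, blow each up, observe that the only residual indeterminacy on the exceptional planes sits over the four corners of the quadrilateral, and resolve these with four further point blow-ups. The paper carries this out with explicit chart computations on each $E_i\cong\mathbb{P}^1\times\mathbb{P}^1$ rather than invoking your local normal form $y=x\,a/b$, but the structure of the argument and the count of blow-ups are identical.
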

		We start the resolution by resolving each of the base lines B$_1,\dots,$B$_4$, then considering their intersections. First, we resolve B$_1$ with the following change of variables.
		\begin{equation}
		\begin{cases}
		X_{11}&= X,\\
		U_{11}&=U,\\
		V_{11}&=\dfrac{V}{U},\\
		W_{11}&=W,
		\end{cases}
		\quad\mathrm{or}\quad
		\begin{cases}
		X&= X_{11},\\
		U&=U_{11},\\
		V&=U_{11}\,V_{11},\\
		W&=W_{11},
		\end{cases}
		\end{equation}
		and, upon setting $U_{11}=0$ to restrict the equation to the exceptional plane E$_1$, we find
		\begin{equation}
		y=\frac{\bt\,V_{11}\,X_{11}-\al\, X_{11}}{W_{11}(\al\,V_{11}-\bt)}.
		\end{equation}
		To cover all of E$_1$, we must also look in the other blow-up chart by considering the following change of variables.
		\begin{equation}
		\begin{cases}
		X_{12}&= X,\\
		U_{12}&=\dfrac{U}{V},\\
		V_{12}&=V,\\
		W_{12}&=W,
		\end{cases}
		\quad\mathrm{or}\quad
		\begin{cases}
		X&= X_{12},\\
		U&=U_{12}\,V_{12},\\
		V&=V_{12},\\
		W&=W_{12}.
		\end{cases}
		\end{equation}
		Upon setting $V_{12}=0$ to restrict the equation to the exceptional plane E$_1$, we find
		\begin{equation}
		y=\frac{\bt\,X_{12}-\al\, X_{12}\,U_{12}}{W_{12}(\al-\bt\,U_{12})}.
		\end{equation}
		In both charts we find $\phi_3$ is undefined where $X_{11}=X_{12}=X=0$ and $W_{11}=W_{12}=W=0$. However, in the original coordinates this corresponds to $[0:0:0:0]$, and is therefore not in the domain.
		
		Since $X$ and $W$ cannot vanish simultaneously on E$_1$, by taking the pair $$([u_1:1],[v_1:1])\in\mathbb{P}^1\times\mathbb{P}^1,$$ where both
		\begin{align*}
			[u_1:1]&=[X_{11}:W_{11}]=[X:W],\\
			[v_1:1]&=[V_{11}:1]=[1:U_{12}]=[v:u],
		\end{align*}
		hold, then $$([u_1:1],[v_1:1])\in\mathbb{P}^1\times\mathbb{P}^1$$ act as projective coordinates on E$_1\cong\mathbb{P}^1\times\mathbb{P}^1$.
		
		The map $\phi_3$ restricted to E$_1$ induces the map $\phi_{31}:\mathbb{P}^1\times\mathbb{P}^1\to\mathbb{P}^1$. We cover E$_1$ in affine coordinate charts. For example, in the chart $(u_1,v_1)=(X_{11}/W_{11},V_{11})$ we have
		\begin{equation}
		y=\frac{\al\,u_1-\bt\,u_1\,v_1}{\bt-\al\,v_1}.
		\end{equation}
		In this chart we see the base point $(u_1,v_1)=(0,\frac{\bt}{\al})$. Note that this point on E$_1$ actually corresponds to the intersection with the base line B$_2$. In the chart $(U_1,V_1):=(\frac{1}{u_1},\frac{1}{v_1})=(W_{12}/X_{12},U_{12})$, we have
		$$y=\frac{\al\,V_1-\bt}{\bt\,U_1\,V_1-\al\,U_1}.$$
		In this chart we see the base point $(U_1,V_1)=(0,\frac{\bt}{\al})$. This base point corresponds to the intersection of E$_1$ with B$_3$. These intersections are all the base varieties we find on E$_1$.
		
		Next we blow-up along B$_2$. Applying the change of variables
		\begin{equation}
		\begin{cases}
		X_{21}&= X,\\
		U_{21}&=U,\\
		V_{21}&=\dfrac{\al\,V-\bt\,U}{X},\\
		W_{21}&=W,
		\end{cases}
		\quad\mathrm{or}\quad
		\begin{cases}
		X&=X_{21},\\
		U&=U_{21},\\
		V&=\dfrac{X_{21}\,V_{21}+\bt\,U_{21}}{\al},\\
		W&=W_{21},
		\end{cases}
		\end{equation}
		and setting $X_{21}=0$, on the exceptional plane E$_2$ we find
		$$y=\frac{\bt^2-\al^2}{\al}\frac{U_{21}}{W_{21}}\frac{1}{V_{21}}.$$
		In the other blow-up chart, applying the change of variables
		\begin{equation}\label{refnextsection}
		\begin{cases}
		X_{22}&= \dfrac{X}{\al\,V-\bt\,U},\\
		U_{22}&=U,\\
		V_{22}&=\al\,V-\bt\,U,\\
		W_{22}&=W,
		\end{cases}
		\quad\mathrm{or}\quad
		\begin{cases}
		X&=X_{22}\,V_{22},\\
		U&=U_{22},\\
		V&=\dfrac{V_{22}+\bt\,U_{22}}{\al},\\
		W&=W_{22},
		\end{cases}
		\end{equation}
		and setting $V_{22}=0$, we find
		$$y=\frac{\bt^2-\al^2}{\al}\frac{U_{22}}{W_{22}}X_{22}.$$
		Once again, note that $U_{12}=U_{21}$ and $W_{12}=W_{21}$ cannot vanish simultaneously on E$_2$. Take the pair $([u_2:1],[v_2:1])\in\mathbb{P}^1\times\mathbb{P}^1$, such that $[u_2:1]=[U_{21}:W_{21}]=[U:W]$ and $[v_2,1]=[V_{21}:1]=[1:X_{22}]$, and use $([u_2:1],[v_2:1])$ as coordinates on E$_2\cong\mathbb{P}^1\times\mathbb{P}^1$. The map $\phi_3$ restricted to E$_2$ induces a map $\phi_{32}:\mathbb{P}^1\times\mathbb{P}^1\to\mathbb{P}^1$. 
		
		Similarly for B$_3$ and B$_4$, after a single blow-up we find on the exceptional planes E$_3$ and E$_4$ the maps,
		\begin{align*}
			&\phi_{33}:\mathbb{P}^1\times\mathbb{P}^1\longrightarrow\mathbb{P}^1,\\
			&([u_3:1],[v_3:1])=([X:V],[W:\bt\,V-\al\,U])\longmapsto[y:1]=[\al\,u_3:v_3(\al^2-\bt^2)],\\
			&\phi_{34}:\mathbb{P}^1\times\mathbb{P}^1\longrightarrow\mathbb{P}^1,\\
			&([u_4:1],[v_4:1])=([V:U],[W:X])\longmapsto[y:1]=[\bt\,u_4-\al:v_4(\al\,u_4-\bt)],
		\end{align*}
		respectively. Note that all base points on every exceptional plane correspond to the intersection of the exceptional plane with another base variety. These base points on the exceptional planes corresponding to intersections with other base varieties are each resolved after one blow-up. This proves the proposition.
		
		\section{Transformations of lattice equations from geometry}\label{sect:Q3}
		In this section, we consider iterations of exceptional varieties found in Section 2 on the lattice. This approach turns out to yield new transformations and reductions of ABS equations. 
		\begin{figure}[H]
			\begin{center}
				\begin{tikzpicture}
				\foreach \x in {-1,0,+1}	{
					\draw (-4,-3*\x) -- (4,-3*\x);
					\draw (-3*\x,-4) -- (-3*\x,4);
				}
				\foreach \x in {-1,0,+1}	{
					\foreach \y in {-1,0,+1}	{
						\draw[black,fill=white] (3*\x,3*\y) circle (2.5pt);
					}
				}
				\draw[black,fill=white] (-3,-3) node[above left] {$x_{l-1,m-1}$};
				\draw[black,fill=white] (0,-3) node[above left] {$x_{l,m-1}$};
				\draw[black,fill=white] (3,-3) node[above left] {$x_{l+1,m-1}$};
				
				\draw[black,fill=white] (-3,0) node[above left] {$x_{l-1,m}$};
				\draw[black,fill=white] (0,0) node[above left] {$x_{l,m}$};
				\draw[black,fill=white] (3,0) node[above left] {$x_{l+1,m}$};
				
				\draw[black,fill=white] (-3,3) node[above left] {$x_{l-1,m+1}$};
				\draw[black,fill=white] (0,3) node[above left] {$x_{l,m+1}$};
				\draw[black,fill=white] (3,3) node[above left] {$x_{l+1,m+1}$};
				
				\draw[black,fill=white] (-1.5,-1.5) node[] {$\mathcal{F}_{l,m}$};
				\draw[black,fill=white] (1.5,-1.5) node[] {$\mathcal{F}_{l+1,m}$};
				
				\draw[black,fill=white] (-1.5,1.5) node[] {$\mathcal{F}_{l,m+1}$};
				\draw[black,fill=white] (1.5,1.5) node[] {$\mathcal{F}_{l+1,m+1}$};
				\end{tikzpicture}
			\end{center}
			\caption{For each quadrilateral with vertices $(l, m)$, $(l+1, m)$, $(l, m+1)$, $(l+1, m+1)$, and corresponding initial values $x_{l,m}$, $x_{l+1, m}$, $x_{l,m+1}$, we denote the resolved initial value space corresponding to that quadrilateral by $\mathcal F_{l,m}$.} 
			\label{fig5:asf2d}
		\end{figure}
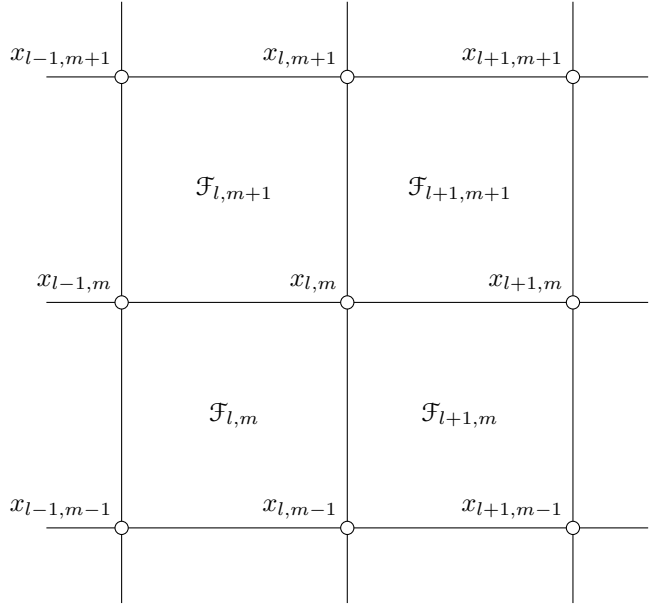
		In what follows, we fix $(l, m)$, and hence a quadrilateral, and carry out resolutions of an ABS equation in the space of initial values corresponding to this quadrilateral. As shown in Figure \ref{fig5:asf2d}, we denote the resolved space for the quadrilateral by $\mathcal F_{l,m}$.
		
		\subsection{Transformations arising from resolution of singularities}
		We prove the following theorem.
		\begin{thm}\label{thm:coalescence}
			The ABS equations H3$_{\dl=0}$, Q1$_{\dl}$, Q3$_{\dl=0}$, and A1$_{\dl=0}$ all have transformations to the equation
			\begin{equation}\label{eq:mystery}
			w_{l+1,m+1}=w_{l,m}\frac{(w_{l+1,m}+a)(w_{l,m+1}+b)}{(w_{l+1,m}+b)(w_{l,m+1}+a)},
			\end{equation}.
		\end{thm}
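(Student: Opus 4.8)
The plan is to exploit the explicit resolution carried out in Proposition~\ref{prp:H3} and, in particular, the structure of the exceptional plane $\mathrm E_1$, on which the map $\phi_3$ descends to a genuinely defined birational map $\phi_{31}:\mathbb P^1\times\mathbb P^1\to\mathbb P^1$ in the coordinates $(u_1,v_1)$. The key observation to make is that in the chart $(u_1,v_1)=(X_{11}/W_{11},V_{11})$ we derived
\begin{equation*}
y=\frac{\al\,u_1-\bt\,u_1\,v_1}{\bt-\al\,v_1}=u_1\,\frac{\al-\bt\,v_1}{\bt-\al\,v_1},
\end{equation*}
which, written with the lattice identification \eqref{eq:PdiffE}, is precisely Equation~\eqref{eq_H3y} with $x$ replaced by $u_1$ and $u,v$ replaced by functions of $v_1$ on neighbouring edges. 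First I would make this precise: iterate the construction of $\mathrm E_1$ over the lattice (as Figure~\ref{fig5:asf2d} indicates is the intended programme), so that each quadrilateral $\mathcal F_{l,m}$ carries a copy of $\mathrm E_1$ with coordinates $(u_1,v_1)_{l,m}$, and then track how $u_1$ and $v_1$ propagate under a shift in $l$ and in $m$. Because $v_1=[V_{11}:1]=[v:u]$ is a ratio of edge variables and $u_1=[X:W]$, the shift relations for these new variables will be first-order in each direction, and combining the $l$-shift and the $m$-shift of the relation for $y$ should collapse to a single quad-equation in the $u_1$-variable alone.

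The substitution I expect to work is $w_{l,m}=c\,u_1|_{\mathcal F_{l,m}}$ for a suitable normalising constant $c$ (or a Möbius image thereof), with the two parameters $a,b$ in \eqref{eq:mystery} being expressible through the ABS lattice parameters $\al_l,\bt_m$ attached to the two directions — most naturally $a$ and $b$ proportional to $\al/\bt$ and $\bt/\al$ up to the normalisation, matching the two base points $(0,\bt/\al)$ on $\mathrm E_1$ that were identified in the proof of Proposition~\ref{prp:H3} as the intersections with $\mathrm B_2$ and $\mathrm B_3$. Concretely, the steps are: (i) for H3$_{\dl=0}$, write out the $l$- and $m$-shifted forms of $y=u_1(\al-\bt v_1)/(\bt-\al v_1)$, eliminate $v_1$ and its shifts using the definition $v_1=[v:u]$ together with \eqref{eq_H3y}, and verify the result is \eqref{eq:mystery}; (ii) observe that Equation~\eqref{eq:qp2} is the staircase reduction of \eqref{eq:mystery} under $l+2m=\text{const}$, which cross-checks the transformation against the reduction already recorded in the introduction; (iii) for Q1$_{\dl}$, Q3$_{\dl=0}$, A1$_{\dl=0}$, repeat the resolution-of-singularities computation (these equations have an analogous exceptional-plane structure because they share the affine-linear quad form and the symmetries used by Adler–Bobenko–Suris), extract the analogue of the $u_1$-coordinate on the relevant exceptional plane, and check it again satisfies \eqref{eq:mystery}, possibly after a Möbius change of the dependent variable and a relabelling of $(a,b)$ in terms of that equation's parameters.

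The main obstacle will be step (iii): while H3$_{\dl=0}$ is handled directly by Proposition~\ref{prp:H3}, the other three ABS equations are not resolved in the excerpt, so one must either carry out the analogous blow-ups for each (identifying which exceptional plane plays the role of $\mathrm E_1$ and which base variety intersections furnish the two parameters), or find a more uniform argument — for instance, transforming Q1$_\dl$, Q3$_{\dl=0}$, A1$_{\dl=0}$ first to H3$_{\dl=0}$ by a known Möbius/point transformation and then composing with the H3 transformation. The Q3 case is the most delicate, since its quad polynomial \eqref{eq:Q3} has genuinely quadratic-looking coefficient structure in $\al,\bt$; there the periodic reduction $w_{l,m+1}=k/w_{l+1,m}$ mentioned at the end of the introduction and studied in Section~\ref{sect:Q3} is presumably the tool that pins down the correct change of variables. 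A secondary, more bookkeeping-type difficulty is making sure the transformation is stated at the level of the lattice equation and not merely on one quadrilateral: one must confirm that the $(u_1)_{l,m}$ defined quadrilateral-by-quadrilateral glue into a single function $w_{l,m}$ on the whole lattice, i.e.\ that the value read off from $\mathcal F_{l,m}$ agrees with the value read off from the adjacent $\mathcal F_{l+1,m}$ and $\mathcal F_{l,m+1}$ on their shared edge — this compatibility is exactly what forces \eqref{eq:mystery} to hold and should fall out of the shift computation in step (i).
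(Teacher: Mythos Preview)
Your overall programme---read off a blow-up coordinate on an exceptional plane and promote it to a Miura variable on the whole lattice---is exactly the paper's method. However, there is a concrete error in your execution for H3$_{\dl=0}$: you propose $w_{l,m}=c\,u_1|_{\mathcal F_{l,m}}$ and plan to eliminate $v_1$. Since $u_1=[X:W]$ is just the affine coordinate $x$, this gives $w_{l,m}=c\,x_{l,m}$, and after eliminating $v_1$ you would simply recover H3 itself, not \eqref{eq:mystery}. The coordinate that actually carries the transformation is the \emph{other} one on $\mathrm E_1$, namely $v_1=[V:U]=v/u$: the paper sets $u_{l+1,m+1}=x_{l,m+1}/x_{l+1,m}$ (which is $v_1$ on the quadrilateral $\mathcal F_{l,m}$) and then eliminates $x$---equivalently $u_1$---using \eqref{eq_H3y} to obtain \eqref{eq:mystery} with $(a,b)=(-\bt/\al,-\al/\bt)$. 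So the roles of $u_1$ and $v_1$ in your plan are reversed; your hedge ``or a M\"obius image thereof'' does not cover this, since $u_1$ and $v_1$ are independent factors of $\mathbb P^1\times\mathbb P^1$.

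For Q1$_\dl$, Q3$_{\dl=0}$, A1$_{\dl=0}$, your fallback of first mapping each to H3$_{\dl=0}$ by a ``known M\"obius/point transformation'' will not work: the ABS list is precisely a classification \emph{up to} such transformations, so distinct entries are not M\"obius-equivalent. The paper follows your first option instead, carrying out the resolution separately for each equation and reading off the Miura variable as the blow-up ratio along the relevant base line: $w_{l+1,m+1}=(x_{l,m}-\al\,x_{l+1,m})/(x_{l,m}-\bt\,x_{l,m+1})$ for Q3$_{\dl=0}$, $-(x_{l,m}-x_{l+1,m}\pm\al\dl)/(x_{l,m}-x_{l,m+1}\pm\bt\dl)$ for Q1$_\dl$, and $-(x_{l,m}+x_{l+1,m})/(x_{l,m}+x_{l,m+1})$ for A1$_{\dl=0}$. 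Each case is then a direct substitution-and-elimination using the corresponding quad-equation; the periodic reduction you cite from Section~\ref{sect:Q3} and your cross-check (ii) play no role in the proof.
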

		The proof is provided case by case in the following subsections.
		\subsubsection{Transformations of H3$_{\dl=0}$}\label{proofH3}
		Consider H3$_{\dl=0}$ \eqref{eq_H3y}, which we have already resolved in Section 2. We focus on the exceptional plane E$_2$, which is the result of the blow-up given by \eqref{refnextsection}.
		
		The resolved space $\mathcal{F}_{l,m}$ of each quadrilateral contains a copy of this exceptional plane E$_2$. It is parameterised by $x_{l+1,m}$ and
		\begin{align}\label{defwh3}
			w_{l+1,m+1}:=\frac{x_{l,m}}{\al\,x_{l,m+1}-\bt\,x_{l+1,m}},
		\end{align}
		corresponding to the variable $X_{22}$ (see \eqref{refnextsection}).
		
		Using \eqref{eq_H3y}, we re-express $w_{l+1,m+1}$ in terms of initial values corresponding to neighbouring quadrilaterals. In particular, by iterating back to an earlier set of initial conditions. We find 
		\begin{align}\label{eq5:u11}
			w_{l+1,m+1}&=x_{l,m}\,(\bt\,x_{l,m}-\al\,x_{l-1,m+1})\,(\al\,x_{l,m}-\bt\,x_{l+1,m-1})\,/\notag\\
			&\qquad\left(\al\,x_{l-1,m}\,(\al\,x_{l,m}-\bt\,x_{l-1,m+1})\,(\al\,x_{l,m}-\bt\,x_{l+1,m-1})\right.\notag\\
			&\qquad\qquad\left.-\bt\,x_{l,m-1}\,(\bt\,x_{l,m}-\al\,x_{l-1,m+1})\,(\bt\,x_{l,m}-\al\,x_{l+1,m-1})\right).
		\end{align}
		Using \eqref{defwh3}, we make the following substitutions
		\begin{subequations}
			\begin{align}
				x_{l+1,m-1}=\frac{\al\,x_{l,m}\,w_{l+1,m}-x_{l,m-1}}{\bt\,w_{l+1,m}},\\
				x_{l-1,m+1}=\frac{x_{l-1,m}+\bt\,x_{l,m}\,w_{l,m+1}}{\al\,w_{l,m+1}}.
			\end{align}
		\end{subequations}
		Equation \eqref{eq5:u11} becomes
		\begin{equation}
		w_{l+1,m+1}=\frac{x_{l,m}}{\bt\,x_{l-1,m}-\al\,x_{l,m-1}+(\bt^2-\al^2)(w_{l,m+1}-w_{l+1,m})\,x_{l,m}}.
		\end{equation}
		Finally, using \eqref{eq_H3y} to rewrite $x_{l,m}$ in terms of $x_{l-1,m-1},x_{l-1,m},x_{l,m-1}$ and making the substitution for $w_{l,m}$ using \eqref{eq_H3y}, we find
		\begin{equation}\label{firstn}
		w_{l+1,m+1}=\frac{w_{l,m}}{(\bt^2-\al^2)(w_{l,m+1}-w_{l+1,m})\,w_{l,m}+1}.
		\end{equation}
		To the best of our knowledge, this transformation of H3$_{\dl=0}$ using \eqref{defwh3} does not appear in the literature.
		
		Exactly the same procedure applied to the exceptional plane resulting from the blow-up of B$_1$ would have led to the following transformation
		\begin{equation}
		u_{l+1,m+1}=\frac{x_{l,m+1}}{x_{l+1,m}}.
		\end{equation}
		This gives the equation
		\begin{equation}\label{eq:buh}
		u_{l+1,m+1}=u_{l,m}\frac{(u_{l+1,m}-\frac{\bt}{\al})(u_{l,m+1}-\frac{\al}{\bt})}{(u_{l+1,m}-\frac{\al}{\bt})(u_{l,m+1}-\frac{\bt}{\al})}.
		\end{equation}
		This particular transformation is known and was found in \cite{nijhoff2009soliton}. This proves Theorem \ref{thm:coalescence} for the case H3$_{\dl=0}$.
		
		\subsubsection{Transformations of Q3$_{\dl=0}$}\label{proofQ3}
		Consider Q3$_{\dl=0}$, given by \eqref{eq:Q3}. This equation has 3 base varieties B$_1$, B$_2$, and B$_3$, given by
		\begin{align}
			\al\,X-U=\bt\,X-V=\,0,\tag{B$_1$}\\
			X-\al\,U=X-\bt\,V=\,0, \tag{B$_2$}\\
			W=\bt\,(\al^2-1)\,X\,U-\al\,(\bt^2-1)\,X\,V+(\bt^2-\al^2)\,U\,V=\,0, \tag{B$_3$}
		\end{align}
		where $[x:u:v:1]=:[X:U:V:W].$
		
		Consider B$_2$. To resolve B$_2$ we define
		\begin{align*}
			\begin{cases}
				X_{11}&=X,\\
				U_{11}&=\dfrac{X-\al\,U}{X-\bt\,V},\\
				V_{11}&=X-\bt\,V,\\
				W_{11}&=W,
			\end{cases}\nonumber
			\quad \mathrm{or}\quad
			\begin{cases}
				X&=X_{11},\\
				U&=\dfrac{X_{11}-U_{11}\,V_{11}}{\al},\\
				V&=\dfrac{X_{11}-V_{11}}{\bt}.
			\end{cases}
		\end{align*}
		B$_2$ is replaced by an exceptional plane E$_2$ which is parameterised by $X_{11}$ and $U_{11}$. Define 
		\begin{equation}\label{eq:blowUpQ3}
		w_{l+1,m+1}:=\frac{x_{l,m}-\al\,x_{l+1,m}}{x_{l,m}-\bt\,x_{l,m+1}}.
		\end{equation}
		Iterating into neighbouring quadrilaterals, we find
		\begin{align}\label{flightscannerco2}
			\nonumber w_{l+1,m+1}=&((\al ^2-1)\,(x_{1,m}-\bt\,x_{l,m-1})\, (\bt\,x_{l,m}-\al\,x_{l+1,m-1}) \\
			\nonumber &\qquad ((\al^2-\bt ^2)\,x_{l-1,m}-\bt\,(\al ^2-1)\, x_{l-1,m+1}+\al\,(\bt ^2-1)\,x_{l,m}))\\
			\nonumber &\qquad/((\bt^2-1)\,(x_{l,m}-\al\,x_{l-1,m})\,(\al\,x_{l,m}-\bt\,x_{l-1,m+1})\\
			&\qquad(\bt\,(\al^2-1)\,x_{l,m}+(\bt^2-\al ^2)\,x_{l,m-1}-\al\,(\bt ^2-1)\,x_{l+1,m-1})).
		\end{align}
		We eliminate $x_{l+1,m-1}$, $x_{l-1,m+1}$ by using \eqref{eq:blowUpQ3}, rewritten as
		\begin{subequations}
			\begin{align}
				x_{l+1,m-1}&=\frac{x_{l,m-1}-w_{l+1,m}\,x_{l,m-1}+\bt\,w_{l+1,m}\, x_{l,m}}{\al},\\
				x_{l-1,m+1}&=\frac{w_{l,m+1}\,x_{l-1,m}-x_{l-1,m}+\al\,x_{l,m}}{\bt\,w_{l,m+1}},
			\end{align}
		\end{subequations}
		and eliminate $x_{l,m}$, $x_{l-1,m}$, and $x_{l,m-1}$ by using \eqref{eq:Q3} and \eqref{eq:blowUpQ3} in sequence to arrive at
		$$w_{l+1,m+1}=w_{l,m}\frac{(w_{l+1,m}-1)\,(\al^2-1-(\bt^2-1)\,w_{l,m+1})}{(w_{l,m+1}-1)\,(\al^2-1-(\bt^2-1)\,w_{l+1,m})}.$$
		
		Defining the parameter
		$$r:=\frac{\al^2-1}{\bt^2-1},$$
		we find the following lattice equation for $w_{l,m}$,
		\begin{equation}\label{eq:lowQ3}
		w_{l+1,m+1}=w_{l,m}\frac{(w_{l+1,m}-1)\,(w_{l,m+1}-r)}{(w_{l,m+1}-1)\,(w_{l+1,m}-r)}.
		\end{equation}
		Hence we have proven Theorem \ref{thm:coalescence} for the case Q3$_{\dl=0}$.
		
		Other exceptional planes lead to additional transformations. Following this same procedure using E$_1$ leads us to introduce the variable $v_{l,m}$, where
		$$v_{l+1,m+1}=\frac{\al\,x_{l,m}-x_{l+1,m}}{\bt\,x_{l,m}-x_{l,m+1}}.$$
		This a transformation of Q3$_{\dl=0}$ to
		\begin{equation}\label{eq:otherLatticeEqn}
		v_{l+1,m+1}=v_{l,m}\frac{(v_{l+1,m}-\frac{1}{r_1})(v_{l,m+1}-r_1\,r_2)}{(v_{l,m+1}-\frac{1}{r_1})(v_{l+1,m}-r_1\,r_2)},
		\end{equation}
		where the values of $r_1$ and $r_2$ are given by
		$$r_1=\frac{\al}{\bt},\ r_2=\frac{\al^2-1}{\bt^2-1}.$$ This provides a second case of Theorem \ref{thm:coalescence} for Q3$_{\dl=0}$.
		
		\subsubsection{Transformations of Q1$_\dl$}\label{proofQ1}
		Consider the ABS equation Q1$_\dl$ given by \eqref{eq:Q1}.
		Resolution of base varieties leads to the transformations
		\begin{subequations}
			\begin{align}
				u_{l,m}=-\frac{x_{l,m}-x_{l+1,m}+\al\,\dl}{x_{l,m}-x_{l,m+1}+\bt\,\dl},\\ v_{l,m}=-\frac{x_{l,m}-x_{l+1,m}-\al\,\dl}{x_{l,m}-x_{l,m+1}-\bt\,\dl}.
			\end{align}
		\end{subequations}
		Following the same procedures as in earlier sections, we are led to the equations
		\begin{subequations}\label{A1variables}
			\begin{align}
				u_{l+1,m+1}&=u_{l,m}\frac{(u_{l+1,m}+1)(u_{l,m+1}+\frac{\bt}{\al})}{(u_{l+1,m}+\frac{\bt}{\al})(u_{l,m+1}+1)},\\
				v_{l+1,m+1}&=v_{l,m}\frac{(v_{l+1,m}+1)(v_{l,m+1}+\frac{\al}{\bt})}{(v_{l+1,m}+\frac{\al}{\bt})(v_{l,m+1}+1)}.
			\end{align}
		\end{subequations}
		Each these equations is of the form \eqref{eq:mystery}, proving Theorem \ref{thm:coalescence} for the case Q1$_{\dl}$.
		\subsubsection{Transformations of A1$_\dl$}\label{proofA1}
		Consider the ABS equation A1$_{\dl}$, given by \eqref{eq:A1}. Solving for the vertex $y$ and considering the equation as a map $\mathbb{P}^3\to\mathbb{P}^1$ we find three base curves. We will focus here on those which are visible in all affine charts. These are the base lines B$_1$ and B$_2$, given by
		\begin{align}
			x+u+\al\,\dl=x+v+\bt\,\dl&=0,\tag{B$_1$}\\
			x+u-\al\,\dl=x+v-\bt\,\dl&=0.\tag{B$_2$}
		\end{align}
		Following the procedure outlined above, define the variables $u_{l,m}$, $v_{l,m}$ from the exceptional planes E$_1$ and E$_2$ respectively, so that
		\begin{subequations}
			\begin{align}
				u_{l,m}&=-\frac{x_{l,m}+x_{l+1,m}+\al\,\dl}{x_{l,m}+x_{l,m+1}+\bt\,\dl},\\
				v_{l,m}&=-\frac{x_{l,m}+x_{l+1,m}-\al\,\dl}{x_{l,m}+x_{l,m+1}-\bt\,\dl}.
			\end{align}
		\end{subequations}
		Unlike  the earlier cases of Q1$_\dl$, Q3$_{\dl=0}$ and H3$_{\dl=0}$, we find a system of equations for $u_{l,m},v_{l,m}$:
		\begin{subequations}\label{eq:sys1}
			\begin{align}
				u_{l+1,m+1}=u_{l,m}\frac{(v_{l+1,m}+1)(\bt\,v_{l,m+1}+\al)}{(v_{l,m+1}+1)(\bt\,v_{l+1,m}+\al)},\\
				v_{l+1,m+1}=v_{l,m}\frac{(u_{l+1,m}+1)(\bt\,u_{l,m+1}+\al)}{(u_{l,m+1}+1)(\bt\,u_{l+1,m}+\al)}.
			\end{align}
		\end{subequations}
		Alternatively, using \eqref{eq:A1} to show that 
		\begin{equation}
		\frac{\bt\,v_{l,m+1}+\al}{\bt\,v_{l+1,m}+\al}=\frac{v_{l,m}\,v_{l,m+1}}{u_{l,m}\,u_{l,m+1}}\frac{\bt\,u_{l,m+1}+\al}{\bt\,u_{l+1,m}+\al},
		\end{equation}
		we find
		\begin{subequations}\label{eq:sys2}
			\begin{align}
				u_{l+1,m+1}=v_{l,m}\frac{(v_{l+1,m}+1)(\al\,u_{l,m+1}^{-1}+\bt)}{(v_{l,m+1}^{-1}+1)(\bt\,u_{l+1,m}+\al)},\\
				v_{l+1,m+1}=u_{l,m}\frac{(u_{l+1,m}+1)(\al\,v_{l,m+1}^{-1}+\bt)}{(u_{l,m+1}^{-1}+1)(\bt\,v_{l+1,m}+\al)}.
			\end{align}
		\end{subequations}
		When $\dl=0$, $u_{l,m}$ and $v_{l,m}$ coincide as shown by \eqref{A1variables}. In this case, \eqref{eq:sys1} and \eqref{eq:sys2} both become
		\begin{equation}\label{eq:sys3}
		u_{l+1,m+1}=u_{l,m}\frac{(u_{l+1,m}+1)(u_{l,m+1}+\frac{\al}{\bt})}{(u_{l,m+1}+1)(u_{l+1,m}+\frac{\al}{\bt})},
		\end{equation}
		which is again a form of \eqref{eq:mystery}. This proves Theorem \ref{thm:coalescence} for the case A1$_{\dl=0}$.
		
		The results of Sections \ref{proofH3}, \ref{proofQ3}, \ref{proofQ1}, \ref{proofA1} collectively prove Theorem \ref{thm:coalescence}.
		\subsection{Reduction}
		Scaling symmetries show that Equation \eqref{eq:mystery} only depends on one parameter. To see this, consider the trivial scaling $w_{l,m}\mapsto a\,w_{l,m}$, $a\ne 0$, for all $l,m$. If we define $\gm=b/a$ then we have the single parameter equation
		\begin{equation}\label{gmw}
		w_{l+1,m+1}=w_{l,m}\frac{(w_{l+1,m}+1)\,(w_{l,m+1}+\gm)}{(w_{l+1,m}+\gm)\,(w_{l,m+1}+1)}.
		\end{equation}
		We now consider reductions of this equation.
		\begin{prp}\label{prp:reduction}
			The equation \eqref{gmw} has a periodic reduction to a QRT map given by
			\begin{align}\label{eq:refatend}
				y_{n+1}\,y_{n-1}&=\frac{k}{\gm}\frac{(y_n+\gm)\,(y_n+k)}{(y_n+1)\,(y_n+k/\gm)}.
			\end{align}
		\end{prp}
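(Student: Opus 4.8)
The plan is to impose a periodicity constraint on the lattice that collapses the two-dimensional recurrence \eqref{gmw} to a one-dimensional (second-order) ordinary difference equation, and then to verify that the resulting map is a QRT map by exhibiting the biquadratic invariant curve that it preserves. Following the model of the ``periodic reduction'' $w_{l,m+1}=k/w_{l+1,m}$ mentioned in the introduction and studied for Q3, I would set $w_{l,m+1} = k / w_{l+1,m}$ for a nonzero constant $k$, so that the entire solution is determined by a single sequence. A convenient bookkeeping choice is to put $y_n := w_{l,m}$ along a staircase, i.e.\ index the free data by $n = l - m$ (or $l+m$, whichever makes the substitution consistent), so that $w_{l+1,m} = y_{n+1}$, $w_{l,m+1} = y_{n-1}$ — after using the periodicity relation to re-express the shifted-in-$m$ value in terms of the sequence — and $w_{l+1,m+1} = y_n$ shifted appropriately. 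The first step is therefore purely bookkeeping: substitute the periodicity ansatz into \eqref{gmw}, track how each of the four corner values $w_{l,m}, w_{l+1,m}, w_{l,m+1}, w_{l+1,m+1}$ maps onto the $y_n$ sequence, and clear denominators to obtain a relation of the form $y_{n+1} y_{n-1} = R(y_n)$.

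The second step is to simplify $R$. After substituting $w_{l,m+1} \mapsto k/w_{l+1,m}$ and the analogous relation for $w_{l+1,m+1}$ inside \eqref{gmw}, the factors $(w_{l,m+1}+\gm)$, $(w_{l,m+1}+1)$ become $(k + \gm w_{l+1,m})/w_{l+1,m}$ and $(k + w_{l+1,m})/w_{l+1,m}$; the powers of $w_{l+1,m}$ should cancel against the $w_{l,m}$ prefactor and the $w_{l+1,m+1}$ on the left, leaving a rational function of $w_{l+1,m}=y_n$ alone on the right and the product $y_{n+1}y_{n-1}$ on the left. I expect, after grouping, exactly the claimed form
\[
y_{n+1}\,y_{n-1} = \frac{k}{\gm}\,\frac{(y_n+\gm)(y_n+k)}{(y_n+1)(y_n+k/\gm)},
\]
with the constant $k/\gm$ emerging from the overall scaling of the $k$-shifted factors. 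This is a routine but slightly delicate algebraic manipulation — the main place to be careful is matching the index convention so that the two $m$-shifted values really do become $y_{n+1}$ and $y_{n-1}$ symmetrically, rather than $y_{n+1}$ and $y_{n+3}$ or similar; this is the one genuine decision point in the argument.

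The third step is to recognize \eqref{eq:refatend} as a QRT map. The right-hand side is a ratio of two quadratics in $y_n$, so the recurrence has the QRT form $y_{n+1}y_{n-1} = \big(-\beta_2(y_n) \pm \sqrt{\dots}\big)/\dots$ in the multiplicative normalization; concretely, I would exhibit the biquadratic $H(y_{n-1},y_n)$ — symmetric of bidegree $(2,2)$ — that is invariant under the map, by writing the conserved quantity in the standard form and checking $H(y_n,y_{n+1}) = H(y_{n-1},y_n)$ directly using \eqref{eq:refatend}. Equivalently, and more cheaply, it suffices to observe that any three-term recurrence of the shape $y_{n+1}y_{n-1} = f(y_n)$ with $f$ a ratio of quadratics, such that the map $(y_{n-1},y_n)\mapsto(y_n,y_{n+1})$ preserves a pencil of biquadratic curves, is by definition (a symmetric) QRT map in the sense of Quispel–Roberts–Thompson; one then just identifies the two matrices $A_0, A_1$ from the coefficients of the numerator and denominator quadratics. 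The main obstacle, such as it is, is not conceptual but organizational: pinning down the periodicity constraint and the staircase index so that the reduction lands exactly on \eqref{eq:refatend} with the stated constant $k/\gm$, and then presenting the QRT structure cleanly enough to make the ``is a QRT map'' claim self-evident rather than merely plausible.
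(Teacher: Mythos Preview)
Your approach is essentially the paper's: impose the periodic constraint $w_{l,m+1}=k/w_{l+1,m}$, use it (and its $l$-shift $w_{l+1,m+1}=k/w_{l+2,m}$) to eliminate the $m$-shifted values from \eqref{gmw}, and obtain a second-order recurrence in the $l$-direction alone; the paper then simply sets $m=0$, $y_n:=w_{n,0}$, and reads off \eqref{eq:refatend}. Two minor remarks. First, your step-1 bookkeeping is inconsistent: you write $w_{l+1,m}=y_{n+1}$ and $w_{l,m+1}=y_{n-1}$, but the periodicity forces $w_{l,m+1}=k/w_{l+1,m}$, so these cannot both be free sequence values --- in step~2 you silently switch to the correct convention $w_{l+1,m}=y_n$ with $w_{l,m}=y_{n-1}$, $w_{l+2,m}=y_{n+1}$, which is exactly the paper's $y_n=w_{n,0}$ and avoids any staircase indexing. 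Second, the paper's proof stops once \eqref{eq:refatend} is derived; it does not exhibit a biquadratic invariant or the QRT matrices, so your proposed third step is a sensible addition but not part of the argument as written.
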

		\begin{proof}
			Assuming that for some $k$ the function $w_{l,m}$ satisfies the periodic condition $w_{l,m+1}=k/w_{l+1,m}$, we find from \eqref{gmw} 
			\begin{align*}
				\frac{k}{w_{l+2,m}}&=w_{l,m}\frac{(w_{l+1,m}+1)\,(\frac{k}{w_{l+1,m}}+\gm)}{(w_{l+1,m}+\gm)\,(\frac{k}{w_{l+1,m}}+1)},\\
			\end{align*}
			and hence,
			\begin{align*}
				w_{l+2,m}&=\frac{k}{\gm\,w_{l,m}}\frac{(w_{l+1,m}+\gm)\,(w_{l+1,m}+k)}{(w_{l+1,m}+1)\,(w_{l+1,m}+k/\gm)}.
			\end{align*}
			Finally, letting $n:=l+1$, $m=0$, and defining $y_n:=w_{n,0}$ we are led to \eqref{eq:refatend}. This proves the Proposition.
		\end{proof}
		The Equation \eqref{eq:refatend} is a QRT map with surface type $A_3^{(1)}$. We note that this equation can be deautonomised to an equation often referred to $q\mathrm{P}_{\mathrm{VI}}$ in the literature \cite{Sak2001}.
		
		\section{Conclusion}\label{sect_conc}
		In this paper, we carried out a resolution of singularities of the initial value space of a lattice equation for the first time. This creates a framework that naturally leads to transformations of variables that are used for resolution. This framework turns out to be useful in other ways. We have applied the framework to find new transformations of ABS equations to a single lattice equation. Moreover, we were led to natural reductions to ordinary difference equations. 
		
		In this paper we considered $A1_{\dl=0}$, H3$_{\dl=0}$, $Q1_{\dl}$, and Q3$_{\dl=0}$. We found a Miura transformation in each case that transformed the respective equation to a single lattice equation, namely \eqref{eq:mystery}. We also found reductions to the discrete Painlev\'e equation of surface type $A_3^{(1)}$.  The composition of the Miura transformation with the reduction of \eqref{eq:mystery} implies new reductions of the lattice equations  A1$_{\dl=0}$, H3$_{\dl=0}$, Q1$_{\dl}$, Q3$_{\dl=0}$.  
		
		Open questions include the actions of such transformations and reductions on special solutions and how to use these new reductions to find new Lax pairs for Painlev\'e equations. There is a deep relationship between linearized action on initial-value space and algebraic entropy, which remains an open question for lattice equations.
		
		\section*{Acknowledgements} The authors would like to thank T. Tsuda, L. Paunescu and C. Viallet for very helpful discussions. 
		\newpage
		
		\bibliographystyle{plain}

	\end{document}